\def\b1{{1\!\!1}}
\def\sH{{\mathsf H}}
\def\bC{{\mathbb C}}           
\def\bI{{\mathbb I}}
\def\bN{{\mathbb N}}
\def\bR{{\mathbb R}}
\def\gB{{\mathfrak B}}
\def\gS{{\mathfrak S}}
\def\beq{\begin{eqnarray}}
\def\eeq{\end{eqnarray}}
\newtheoremstyle{thm}
{12pt}
{12pt}
{\itshape}
{}
{\itshape\bfseries}
{}
{1em}
{}
\theoremstyle{thm}
\newtheorem{theorem}{Theorem}
\newtheorem{proposition}[theorem]{Proposition}
\newtheorem{definition}[theorem]{Definition}
\title{\Large A quantum key distribution scheme based on tripartite entanglement and violation of CHSH inequality}
\author{ Davide Pastorello\footnote{d.pastorello@unitn.it}\\\normalsize Department of  Mathematics, University of Trento,\\
\normalsize Trento Institute for Fundamental Physics and Applications\\
 \normalsize via Sommarive 14, 38123 Povo (Trento), Italy.}
\date{}
\newcommand{\bra}[1]{{\left\langle{#1}\right\vert}}
\newcommand{\ket}[1]{{\left\vert{#1}\right\rangle}}
\newcommand{\qw}[1][-1]{\ar @{-} [0,#1]}
\newcommand{\qwx}[1][-1]{\ar @{-} [#1,0]}
\newcommand{\gate}[1]{*+<1.5em>{#1} \POS ="i","i"+UR;"i"+UL **\dir{-};"i"+DL **\dir{-};"i"+DR **\dir{-};"i"+UR **\dir{-},"i" \qw}
\newcommand{\meter}{*=<2.8em,2.4em>{\xy ="j","j"-<.778em,.322em>;{"j"+<.778em,-.322em> \ellipse ur,_{}},"j"-<0em,.4em>;p+<.5em,.9em> **\dir{-},"j"+<2.2em,2.2em>*{},"j"-<2.2em,2.2em>*{} \endxy} \POS ="i","i"+UR;"i"+UL **\dir{-};"i"+DL **\dir{-};"i"+DR **\dir{-};"i"+UR **\dir{-},"i" \qw}
\newcommand{\control}{*!<0em,.095em>-=-<.2em>{\bullet}}
\newcommand{\ctrl}[1]{\control \qwx[#1] \qw}
\newcommand{\targ}{*+<.04em,.04em>{\xy ="i","i"-<.50em,0em>;"i"+<.50em,0em> **\dir{-}, "i"-<0em,.47em>;"i"+<0em,.47em> **\dir{-},"i"*\xycircle<.6em>{} \endxy} \qw}
\newcommand{\qswap}{*=<0em>{\times} \qw}
\newcommand{\gategroup}[6]{\POS"#1,#2"."#3,#2"."#1,#4"."#3,#4"!C*+<#5>\frm{#6}}
\newcommand{\lstick}[1]{*!R!<.5em,0em>=<0em>{#1}}
\newcommand{\Qcircuit}{\xymatrix @*=<0em>}
\begin{document}



\maketitle
\begin{abstract}

\noindent
Entanglement is a well-known resource in quantum information, in particular it can be exploited for quantum key distribution (QKD). 
 In this paper we define a two-way QKD scheme employing GHZ-type states of three qubits obtaining an extension of the standard E91 protocol with a significant increasing of the number of shared bits. Eavesdropping attacks can be detected measuring violation of the CHSH inequality and the secret key rate can be estimated in a device-independent scenario. 

\end{abstract}

\small{Keywords: \emph{Quantum key distribution, entanglement, Bell's inequalities }}

\vspace{0.5cm}	

\section{Introduction}

Quantum processes can be used to distribute a secret key (a random bit-string) in order to use it as a one-time pad for secure communications \cite{n,m}. 
Quantum key distribution (QKD) over a quantum channel prevents effective eavesdropping attacks exploiting principles of Quantum Mechanics. An eavesdropper cannot clone an unknown quantum state (\emph{no-cloning theorem} \cite{no-cloning}) thus he extracts information performing measurement processes, the quantum effects of these measurements can be detected showing that the communication channel is not secure. One of the most celebrated QKD scheme is the E91 protocol \cite{e} which exploits quantum correlations of entangled states to share a private key and uses a measure of the violation of Bell inequalities to check if an eavesdropping occured over the quantum channel. 
\\This paper is devoted to define and analyze a modification of standard E91 protocol based on preparation and processing of tripartite entangled states. The main idea is defining a two-way protocol where Bob transmits a half of an entangled state to Alice who entangles the received state with a third part, as an encoding opeartion, before re-sending it to Bob. Then he perfoms a decoding to read the information communicated by Alice, after the decoding operation the clients share a maximally entangled state which can be used to implement a standard E91 protocol as a subroutine. The classical postprocessing is equivalent to that of E91 protocol, so the security analysis is similar. The advantage w.r.t. to the standard E91 protocol (which can be described as a one-way protocol) is the doubled length of the secret key at the same cost of transmitted qubits. Even if an eavesdropper, Eve, can attack each signal twice, we observe that any Eve's strategy produces a destruction of quantum correlations which can be taken under control by the violation of CHSH inequality. In particular the effect of noise on the secret key rate can be evaluated in terms of such a violation.
\\
In the next section we recall some basic features of quantum systems that are relevant in the present discussion, in particular we refer to \emph{finite-dimensional case}.
In the third section there is a short review on Bell's inequalities, their physical meaning and relative application in quantum key distribution (E91 protocol). In the fourth section we describe the proposed QKD scheme as a two-way extension of the standard E91 protocol based on tripartite entanglement. 
 In the fifth section we analyze the robustness of the protocol against eavesdropping like \emph{intercept/re-send} attacks from a qualitative viewpoint. We show that the security of the protocol is based on the violation of CHSH inequality and entanglement monogamy. In the sixth section we remark that an extimation of the secret key rate can be computed in a device-independent scenario after applying a prescription to purify two-way QKD protocols.  In the last section thare are conclusions and an overview on open issues about the security proof. A technical observation about the entanglement class of the considered states is given in appendix.

\section{Preliminaries and notations}

According to standard formulation of Quantum Mechanics, a complex vector space $\sH$ with inner product $\langle\,\,\,|\,\,\,\rangle$ (a Hilbert space) is associated to any quantum system\footnote{We adopt \emph{Dirac formalism}: A unit vector of the Hilbert space is denoted by the ket $|\psi\rangle$, a vector of dual space is denoted by the bra $\langle\psi|$, the inner product of two vectors is denoted by $\langle\psi|\phi\rangle$, the outer product by $|\psi\rangle\langle\phi|$.}. If $\dim\sH=2$ then the considered quantum system is called \emph{qubit}.   
\\Let $\gB(\sH)$ be the space of linear operators on $\sH$, the set of physical states of the considered quantum system is defined as:
\beq
\gS(\sH):=\{\rho\in\gB(\sH): \rho\geq 0,\, \mbox{tr}\rho=1\}.
\eeq
$\gS(\sH)$ is convex in $\gB(\sH)$ and its extremal elements, called \emph{pure states}, are the rank-1 orthogonal projectors in $\sH$, a non-pure state is called \emph{mixed state}. Since a pure state $\rho$ can be written as $\rho=\ket\psi\bra\psi$ for some unit vector $\ket\psi\in\sH$, then pure states can be represented by equivalence classes of unit vectors where the equivalence relation between $\ket\psi$ and $\ket{\psi'}$ is defined by $\ket\psi\sim\ket{\psi '}$ iff $\ket\psi=e^{i\theta}\ket{\psi'}$ for some $\theta\in\bR$. Therefore two unit vectors differing by a multiplicative phase factor describe the same pure state.  \\
Time evolution of an isolated system is described by a continuous one-parameter group of unitary operators $\{U(t)\}_{t\in\bR^+}$ acting on states. If $\rho_1$ is the state of the system at time $t_1$ and $\rho_2$ is the state of the system at time $t_2>t_1$ then:
\beq
\rho_2=U(t_2-t_1)\rho_1 U^*(t_2-t_1).
\eeq
A measurement process on a quantum system is described by a collection of positive operators $\{E_k\}$ satisfying  $\sum_k E_k=\b1_\sH$ called \emph{positive operator-valued measure} (POVM), the index $k$ runs in the set of all possible outcomes of the measurement, so it is a real number. The probability to measure $k$ when the system is in the state $\rho$ is:
 \beq
p_\rho(k)=\mbox{tr}(E_k\rho).
\eeq 
If we consider a pure state $\ket\psi$ then above probability is simply given by $p_\psi(k)=\bra\psi E_k \psi\rangle.$

\noindent
A special class of POVMs is that of \emph{projective valued measures} (PVMs) whose elements are orthogonal projectors $P_k$, i.e. operators on $\sH$ satisfying $P_k^*=P_k$ and $P_k^2=P_k$. In this case if the measurement is performed when the state of the system is $\rho\in\gS(\sH)$ and it produces the outcome $k$ then the state of the system after the measurement is: 
\beq
\rho'=\frac{P_k\rho P_k}{\mbox{tr}(P_k\rho)},
\eeq
and a selfadjoint operator on $\sH$, called \emph{observable}, can be defined:
\beq
A:=\sum_k kP_k,
\eeq
so the possible outcomes of the measurement described by the PVM $\{P_k\}$ are the eigenvalues of $A$. If the state of the system is $\rho\in\gS(\sH)$ then the \emph{expectation value} of $A$ is 
\beq
\langle A\rangle_\rho=\sum_k k\,\, p_\rho(k)=\mbox{tr} (A\rho).
\eeq

\noindent
Another fundamental quantum feature is the following: If a quantum system is composed by two subsystems $A$ and $B$ that are respectively described in
Hilbert spaces $\sH_A$ and $\sH_B$, then the composite system is described in the Hilbert space tensor product $\sH_A\otimes \sH_B$.\\
Quantum systems, in particular qubits, can be exploited as information storages and physical operations on them can be used for processing and transmitting information.
Let $\sH$ be a 2-dimensional Hilbert space and $(\ket 0, \ket1)$ be an orthonormal basis of $\sH$ that can be identified as the standard basis of $\bC^2$.  In the following we call $(\ket 0, \ket1)$ the \emph{computational basis} of $\sH$. One can assume the vectors $\ket 0$ and $\ket 1$ physically represent the eigenstates of a reference observable, e.g. the rectilinear polarization of a photon described by the Pauli matrix $\sigma_z$. Another basis of $\sH$ is given by $(\ket +, \ket -)$ with $\ket + :=1/\sqrt {2}(\ket 0+\ket 1)$ and $\ket- :=1/\sqrt{2}(\ket 0-\ket 1)$, the vectors $\ket +$, $\ket -$ can be physically interpreted as the eigenstates of circular polarization of a photon for instance.

\begin{definition}
Let $\sH$ be a 2-dimensional Hilbert space and $n\in\bN$. The Hilbert space $\sH^{\otimes n}$ is called \textbf{n-qubit register} and any unitary operator acting on $\sH^{\otimes n}$ is called \textbf{n-qubit quantum gate}.
\end{definition}
\noindent
The \emph{Hadamard gate} $H$ is a 1-qubit quantum gate defined w.r.t. the computational basis by $H|0\rangle:=|+\rangle$ and $H|1\rangle:=|-\rangle$
and it is denoted by the symbol:
\beq
\Qcircuit @C=1cm @R=2cm {
& \gate{H} & \qw
}
\eeq
The Hilbert space of a qubit pair is $\sH\otimes\sH$, with $\dim\sH=2$, an orthonormal basis of $\sH\otimes\sH$ is $\left(\ket{00},\ket{01},\ket{10},\ket{11}\right)$, where $\ket {00}\equiv \ket 0 \otimes \ket 0$, that we call \emph{computational basis} of $\sH\otimes\sH$. A remarkable 2-qubit gate is the \emph{CNOT gate} $\Lambda$ that is defined in the computational basis by:
\beq
\Lambda |00\rangle:=|00\rangle\quad \Lambda |01\rangle:=|01\rangle\quad\Lambda |10\rangle:=|11\rangle\quad \Lambda |11\rangle:=|10\rangle,
\eeq
i.e. $\Lambda$ acts as the identity when the first qubit is in $|0\rangle$ and acts as a bit-flip on the second qubit when first qubit is in $|1\rangle$. The CNOT gate is denoted by the symbol:
\beq
\Qcircuit @C=2em @R=1.7em {
& \ctrl{1} & \qw& \\
& \targ & \qw &
}
\eeq
\\
CNOT gate can be used to define another useful element of a quantum circuit, the \emph{SWAP gate}:
$$
\Qcircuit @C=2em @R=2.8em {
& \qswap & \qw& \\
& \qswap\qwx & \qw &
} 
\Qcircuit @C=2em @R=1.5em {
&  & \\
& \lstick{:=}&\\
&  &
}
\Qcircuit @C=2em @R=1.7em {
& \targ&\ctrl{1} &\targ&\qw \\
& \ctrl{-1}&\targ &\ctrl{-1}&\qw
}
$$
that switches the input qubits. In the fourth section we define a quantum circuit to entangle three qubits for our QKD purpose.

\vspace{0.0cm}

\section{CHSH inequality and standard E91 protocol}

\noindent
Let us briefly recall the notions of entanglement, violation of Bell's inequalities and E91 protocol. Since composite quantum systems are described in tensor product Hilbert spaces, the definition of quantum state itself implies that there is a class of quantum states containing non-classical correlations that are inherently non-local.
\begin{definition}\label{entanglement}
Let $\rho\in\gS(\sH_A\otimes\sH_B)$ be a state of a composite quantum system. $\rho$ is said to be \textbf{separable} if it can be written as: 
\beq
\rho=\sum_{i} \lambda_i \rho_i^{A}\otimes\rho_i^{B}
\eeq
with $\rho_i^{A}\in\gS(\sH_A)$, $\rho_i^{B}\in\gS(\sH_B)$,  $\lambda_i\geq 0$ and $\sum_i\lambda_i=1$. Otherwise it is called \textbf{entangled}.
\end{definition}

\noindent
Entanglement can be interpreted as a \emph{non-local} property: Consider the entangled pure state of a qubit pair identified by the vector $\ket\Phi=\frac{1}{\sqrt{2}}(\ket 0 \otimes \ket\psi + \ket 1\otimes \ket\varphi)\in\sH_A\otimes\sH_B$, if one performs the PVM-measurement $\{\ket 0\bra 0, \ket 1\bra 1\}$ on the first qubit then the state of the second qubit collapses in $\ket\psi$ or in $\ket\varphi$ according to the outcome of the local measurement, this phenomenon occurs even if the qubits are spacelike separated (\emph{EPR paradox} \cite{epr}). However this non-local feature cannot be used for communications faster than light (\emph{no-communication theorem} \cite{peres}). 
\\
In order to elucidate the content of \emph{Bell theorem} \cite{bell}, let us briefly discuss the notion of \emph{local hidden variable theory} based on the assumption of {local realism}. Let us adopt the term \emph{local  realism} in reference to a pair of properties that a general physical theory can present: 
\\
\\
i) \emph{Principle of locality}: If two events are outside their respective light cones then there is no causal connection among them. 
\\
ii) \emph{Counterfactual-definiteness}: The values of physical quantities are definite and unaffected by any measurement process. 
\\
\\
Consider a bipartite quantum system composed by a subsystem $A$ and a subsystem $B$, suppose to perform local measurements on $A$ and $B$ obtaining correlated outcomes when they are spatially separated. Invoking local realism we are assuming that local measurements are independent and do not perturb the systems. So the probability to measure value $k\in\bR$ by a POVM measurement $E$ on the system $A$ and value $l\in\bR$ by a POVM measurement $D$ on the system $B$ can be generally expressed by:
\beq
p(E,k;D,l)=\int_X F_E(k,x)F_D(l,x)d\mu(x),
\eeq
where $(X,\mu)$ is a measure space, $x\mapsto F_E(k,x)$ and $x\mapsto F_D(l,x)$ are measurable functions associated to POVMs $E$ and $D$. The product $F_E(k,x)F_D(l,x)$ is the probability to obtain value $k\in\bR$ by the measurement process $E$ on the first subsystem and value $l\in\bR$ by the measurement process $D$ on the second subsystem for a fixed value of the parameter $x$, called \emph{hidden variable}, encoding correlations between subsystems. 
\begin{definition}
A quantum state $\rho\in\gS(\sH_A\otimes\sH_B)$ admits a \textbf{hidden variable model} if for all POVMs $E=\{E_k\}_{k}\subset\gB(\sH_A)$ and $D=\{D_l\}_{l}\subset\gB(\sH_B)$ there exists a measure space $(X,\mu)$ and measurable functions  $x\mapsto F_E(k,x)$ and $x\mapsto F_D(l,x)$  such that:
\beq
\emph{tr}(\rho(E_k\otimes D_l))=\int_X F_E(k,x)F_D(l,x)d\mu(x).
\eeq
for any $k$ and $l$.  
\end{definition}

\noindent
The above definition provides a class of states of a composite quantum system such that introducing a fictious variable $x$ quantum correlations can be treated as classical correlations in presence of {local realism}.
 The set of states admitting a hidden variable model is non-empty, in particular it contains all separable states.

\begin{proposition}\label{sepBell}
Any separable state $\rho\in\gS(\sH_A\otimes\sH_B)$ admits a hidden variable model.
\end{proposition}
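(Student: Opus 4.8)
The plan is to construct the hidden variable model directly from the separable decomposition of $\rho$, taking the index set of that decomposition as the underlying measure space. First I would write $\rho=\sum_i \lambda_i \rho_i^A\otimes\rho_i^B$ as in Definition \ref{entanglement} and compute the joint outcome probability appearing on the left-hand side of the required identity. Using the elementary factorization
$$
\mbox{tr}\big((\rho_i^A\otimes\rho_i^B)(E_k\otimes D_l)\big)=\mbox{tr}(\rho_i^A E_k)\,\mbox{tr}(\rho_i^B D_l),
$$
which follows from multiplicativity of the trace on tensor products, I obtain
$$
\mbox{tr}\big(\rho(E_k\otimes D_l)\big)=\sum_i \lambda_i\,\mbox{tr}(\rho_i^A E_k)\,\mbox{tr}(\rho_i^B D_l).
$$

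The key observation is that this expression already has the form of the integral in the definition of a hidden variable model, once the discrete index $i$ is reinterpreted as the hidden variable $x$. Concretely, I would let $X$ be the (countable) index set of the decomposition, equip it with the discrete probability measure $\mu$ assigning weight $\lambda_i$ to the point $i$, and define $F_E(k,i):=\mbox{tr}(\rho_i^A E_k)$ and $F_D(l,i):=\mbox{tr}(\rho_i^B D_l)$. Since $X$ is discrete, these functions are automatically measurable, and the integral $\int_X F_E(k,x)F_D(l,x)\,d\mu(x)$ collapses to exactly the sum above, which establishes the identity for every $k$ and $l$.

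It then remains only to verify that the chosen $F_E,F_D$ are legitimate, i.e. nonnegative and normalized so as to represent genuine outcome probabilities for each fixed hidden variable. Nonnegativity is immediate, since $\rho_i^A,\rho_i^B\geq 0$ and $E_k,D_l\geq 0$ force each trace to be nonnegative. For normalization, summing over the outcomes and using $\sum_k E_k=\b1_\sH$ gives $\sum_k F_E(k,i)=\mbox{tr}(\rho_i^A\sum_k E_k)=\mbox{tr}(\rho_i^A)=1$, and the same holds for $F_D$; hence for each $x=i$ the functions define honest probability distributions over the measurement results.

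I do not expect a genuine obstacle here: the statement is essentially a repackaging of the factorization property of product states, and the only point requiring mild care is the bookkeeping that turns a finite or countable convex combination into an integral against a discrete measure. Should one wish to allow continuous mixtures (an integral over states rather than a sum), the argument would go through verbatim with $\mu$ the corresponding mixing measure and $F_E,F_D$ defined pointwise by the same trace formulas.
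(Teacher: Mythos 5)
Your construction is correct and coincides with the paper's own proof: both take the index set of the separable decomposition as the hidden variable space $X$ with $\mu(\{i\})=\lambda_i$ and set $F_E(k,i)=\mathrm{tr}(\rho_i^A E_k)$, $F_D(l,i)=\mathrm{tr}(\rho_i^B D_l)$. Your extra verification of nonnegativity and normalization is a harmless addition the paper leaves implicit.
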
  

\begin{proof}
If $\rho$ is separable then we have the convex combination:
$$\rho=\sum_{i=1}^n \lambda_i \rho_i^{(A)}\otimes\rho_i^{(B)}$$
with $\rho_i^{(A)}\in\gS(\sH_A)$, $\rho_i^{(B)}\in\gS(\sH_B)$. Let be $X:=\{1,2,3,...,n\}$ and $\mu(\{i\}):=\lambda_i$. For all POVMs $\{E_k\}_k\in\gB(\sH_A)$ and $\{D_l\}_l\in\gB(\sH_B)$ let us define:
$$F_E(k,x):=\rho_x^{(A)}(E_k)\quad ,\quad F_D(l,x):=\rho_x^{(B)}(D_l),$$
obtaining the hidden variable model.
\end{proof}

\noindent
Within a hidden variable model, correlation functions obey to a set of constraints called \textbf{Bell's inequalities}. In particular we consider the \emph{CHSH-inequality} (Clauser, Horne, Shimony, Holt \cite{CHSH}) to formulate the Bell theorem:
\begin{theorem}\label{CHSH}
If a state $\rho\in\gS(\sH_A\otimes\sH_B)$ admits a hidden variable model then it satisfies CHSH-inequality:
\beq\label{chsh}
|\emph{tr}[\rho(A\otimes(B-B'))]+\emph{tr}[\rho(A'\otimes (B+B'))]|\leq 2,
\eeq
for all $A,A'\in\gB(\sH_A)$ such that $-\b1_{\sH_A}\leq A,A'\leq \b1_{\sH_A}$ and for all $B,B'\in\gB(\sH_B)$ such that $-\b1_{\sH_B}\leq B,B'\leq \b1_{\sH_B}$.
\end{theorem}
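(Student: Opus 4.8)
The plan is to reduce the operator statement to a pointwise real‑variable inequality under the integral furnished by the hidden variable model. First I would pass from the bounded operators to their spectral data: since $-\b1_{\sH_A}\le A\le\b1_{\sH_A}$, the observable $A$ has a spectral decomposition $A=\sum_k a_k P_k$ with eigenvalues $a_k\in[-1,1]$ and spectral projectors $\{P_k\}$ forming a PVM, and likewise $B=\sum_l b_l Q_l$ on $\sH_B$ (and similarly for $A'$ and $B'$). A PVM is in particular a POVM, so the defining relation of the hidden variable model applies to these spectral measures, giving $\text{tr}(\rho(P_k\otimes Q_l))=\int_X F_A(k,x)F_B(l,x)\,d\mu(x)$ together with the analogous identities for the three remaining pairs.

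Next I would assemble the local averages. Setting $\bar A(x):=\sum_k a_k F_A(k,x)$ and similarly $\bar A'(x),\bar B(x),\bar B'(x)$, multiplying the spectral relation by $a_k b_l$ and summing over $k,l$ yields
\beq
\text{tr}(\rho(A\otimes B))=\int_X \bar A(x)\,\bar B(x)\,d\mu(x),
\eeq
and likewise for the other three terms. Because each $F_A(\cdot,x)$ is a probability distribution over the outcomes (nonnegative and normalised, as forced by $\sum_k P_k=\b1_{\sH_A}$ and $\mu(X)=1$), every local average is a convex combination of eigenvalues lying in $[-1,1]$, hence $|\bar A(x)|,|\bar A'(x)|,|\bar B(x)|,|\bar B'(x)|\le 1$ for $\mu$‑a.e.\ $x$.

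The heart of the argument is then the elementary scalar estimate: for any $a,a',b,b'\in[-1,1]$,
\beq
|a(b-b')+a'(b+b')|\le |b-b'|+|b+b'|=2\max(|b|,|b'|)\le 2 .
\eeq
Applying this with $a=\bar A(x)$, $a'=\bar A'(x)$, and so on bounds, pointwise by $2$, the integrand of
\beq
\text{tr}[\rho(A\otimes(B-B'))]+\text{tr}[\rho(A'\otimes(B+B'))]=\int_X\big(\bar A(x)(\bar B(x)-\bar B'(x))+\bar A'(x)(\bar B(x)+\bar B'(x))\big)\,d\mu(x),
\eeq
and integrating against the probability measure $\mu$ (using $\mu(X)=1$) produces the claimed bound of $2$.

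The step I expect to require the most care is the one quietly used above: the four correlation functions must be represented on one and the same space $(X,\mu)$ with a single family of response functions, i.e.\ one needs the genuine \emph{joint} local hidden variable model rather than an unrelated model for each of the four measurement pairs. This uniformity is exactly what the construction in Proposition \ref{sepBell} supplies, where $X$ and $\mu$ are fixed independently of the chosen POVMs and only the functions $F$ depend on the measurement; I would therefore read the hidden variable model in that uniform sense and check that the normalisation $\sum_k F_A(k,x)=1$ holds $\mu$‑a.e., which is precisely what legitimises treating $\bar A(x)$ as a convex combination and keeps it inside $[-1,1]$.
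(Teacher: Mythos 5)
Your argument is correct, and the first thing to note is that the paper offers no proof of Theorem \ref{CHSH} at all: the statement is presented as the classical Bell--CHSH theorem (with the CHSH paper cited for it), so there is no internal proof to compare against. Your proof is the standard one and it fills the gap soundly: spectral decomposition of $A,A',B,B'$ into PVMs with eigenvalues in $[-1,1]$, the factorised integral representation of the four correlators over a common $(X,\mu)$, the bound $|\bar A(x)|\le 1$ on the local averages as convex combinations of eigenvalues, and the pointwise scalar estimate $|a(b-b')+a'(b+b')|\le |b-b'|+|b+b'|=2\max(|b|,|b'|)\le 2$, integrated against the probability measure. Each of these steps checks out.

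The issue you flag at the end is indeed the only delicate point, and you are right to insist on it. As literally written, the paper's definition of a hidden variable model reads ``for all POVMs $E$ and $D$ there exists a measure space $(X,\mu)$ and functions $F_E$, $F_D$'', which permits both the space and the response functions to depend on the \emph{pair} $(E,D)$. Under that weak reading the four correlators need not share a single $\bar B(x)$, the pointwise CHSH estimate cannot be assembled, and one only obtains the trivial bound $4$ by the triangle inequality. Your resolution --- a single $(X,\mu)$ for all measurements, response functions depending only on the local POVM, and the implicit requirements $F_E(k,x)\ge 0$, $\sum_k F_E(k,x)=1$ and $\mu(X)=1$ --- is exactly the uniform reading realised by the construction in Proposition \ref{sepBell} and is clearly what the author intends; with it your proof is complete. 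It would be worth stating explicitly that the normalisations also force $\mu(X)=1$ (sum the defining identity over $k$ and $l$), since you use that when integrating the pointwise bound.
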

 
\noindent
We have that the violation of CHSH inequality is a sufficient condition for the entanglement of a state, on the countrary there are entangled states admitting a hidden variable model \cite{werner}. 
A remarkable example of CHSH violation is the following: Consider the Pauli matrices $\sigma_x$ and $\sigma_z$ and define the hermitian operators in $\gB(\bC^2)$:
$
A:=\sigma_x, A':=\sigma_z,  B:=\frac{1}{\sqrt 2}(\sigma_x+\sigma_z),  B':=\frac{1}{\sqrt 2}(-\sigma_x+\sigma_z).
$
Consider the unit vector in $\bC^2\otimes\bC^2$:
\beq\label{Bell1}
\ket{\Phi}:=\frac{1}{\sqrt 2} (\ket{00}+\ket{11}),
\eeq
i.e. an entangled pure state of a qubit pair. We can calculate the left-hand term of (\ref{chsh}) where $\rho=\ket\Phi\bra\Phi$, obtaining by direct inspection:
\beq\label{violation}
|\bra{\Phi}A\otimes(B-B')\Phi\rangle+\bra{\Phi}A'\otimes (B+B')\Phi\rangle|=2\sqrt 2,
\eeq
then CHSH inequality is violated. More precisely (\ref{Bell1}) is a maximally entangled state and (\ref{violation}) is the maximum violation of CHSH inequality (experimentally measured for the first time by Aspect et al.  in 1982 \cite{aspect}). 
\\
The E91 protocol \cite{e}  is based on the production of a qubit pair in the maximally entangled pure state (\ref{Bell1}) by Alice. Then she sends one qubit of the pair to Bob\footnote{The entangled qubit pair can be produced by an external source (including an eavesdropper) sending one qubit to Alice and the other one to Bob.}. Alice measures one of the following observables on her qubit:
\beq\label{A}
A_1=\sigma_x\qquad A_2=\frac{1}{\sqrt{2}}(\sigma_x+\sigma_z) \qquad A_3=\sigma_z,
\eeq
on the other side Bob measures one of the following observables on his qubit: 
\beq \label{B}
B_1=\frac{1}{\sqrt{2}}(\sigma_x+\sigma_z) \qquad B_2=\sigma_z \qquad B_3=\frac{1}{\sqrt{2}}(-\sigma_x+\sigma_z).
\eeq
 If qubits are polarized photons or spin-$\frac{1}{2}$ particles then different measurement processes correspond to different horientation angles of Alice and Bob's analyzers.
If Alice and Bob perform the same measurement process then their outcomes coincide.
\\
Let us summarize  E91 protocol as follows:
\\
\\
\textbf{Step 1}: A pair of entangled qubits is produced by Alice. She sends a qubit of the entangled pair to Bob. Alice measures observable $A_i$ with $i\in\{1,2,3\}$ on her qubit. Bob measures observable $B_j$ with $j\in\{1,2,3\}$ on his qubit. The procedure is repeated $N$ times.
\\
\textbf{Step 2}: Alice and Bob declare on a classical public channel what observable they have measured on each qubit. They store the outcomes produced measuring the same observable obtaining two perfectly correlated bit string (sifted key) of length $n\simeq \frac{N}{3}$.
\\
\textbf{Step 3}: Alice and Bob declare on a classical public channel the outcomes produced by different measurements in order to calcuate:
\beq\label{S}
\mathcal S:=|\bra{\Phi}A_1\otimes(B_2-B_3)\Phi\rangle+\bra{\Phi}A_3\otimes (B_2+B_3)\Phi\rangle|.
\eeq
\noindent
At the end of the round the maximum violation of CHSH inequality is expected $\mathcal S=2\sqrt2$. If the experimental value is $\mathcal S<2\sqrt 2$ then some measurement process in the middle of quantum transmission has alterated the state of entangled pairs destroying the quantum correlation, hence an eavesdropping attack has been detected. However a sifted key may be not discarded even if $\mathcal S<2\sqrt 2$, since an estimation of Eve's information in post-processing could show that the survived quantum correlations allowed to produced an acceptable secret key. In this regards see section 6 where some estimations are sketched for the two-way protocol. On the other hand the extreme case $\mathcal S\leq 2$ entails the ereasure of any quantum correlation.\\
Obviously a classical authentication protocol must be implemented over the classical channel in order to ensure that the right person
is at the end of the line. 

\vspace{0.cm}

\section{The proposed QKD protocol}

According to definition \ref{entanglement} a pure state $\ket\phi\bra\phi \in\gS(\sH_A\otimes\sH_B)$ is separable if and only if $\ket\phi=\ket{\psi_A}\otimes\ket{\psi_B}$ with $\ket{\psi_A}\in\sH_A$ and $\ket{\psi_B}\in\sH_B$. More generally: Let $\{\sH_1,...,\sH_n\}$ be a family of finite-dimensional Hilbert spaces. A pure state, represented by $\ket\phi\in\sH_1\otimes\cdots\otimes\sH_n$, is {separable} if and only if $\ket\phi=\ket{\psi_1}\otimes\cdots\otimes \ket{\psi_n}$ with $\ket{\psi_i}\in\sH_i$ $i=1,...,n$. Otherwise if $\ket\phi$ is not a product vector then the corresponding pure state is {entangled}. 

\noindent
For instance a pure state of three qubits can be separable like $\ket{010}$,  \emph{2-entangled} like $\ket\Psi=\frac{1}{\sqrt2}(\ket{01}+\ket{10})\otimes\ket 1$ or \emph{3-entangled} like $\ket{GHZ}=\frac{1}{\sqrt 2}(\ket{000}+\ket{111})$. More precisely pure states of three qubits $a,b,c$ are divided in six entanglement classes: One class of separable states ($a$-$b$-$c$), three classes of 2-entangled states ($a$-$bc$, $b$-$ac$, $c$-$ab$) and two classes of 3-entangled states. These classes correspond to the orbits of the group $GL(2, \bC)^{\otimes 3}$ 
 \cite{cirac}. Representative elements of inequivalent classes of genuine tripartite entangled states of three qubits are:

\beq
\ket{GHZ}=\frac{1}{\sqrt 2}(\ket{000}+\ket{111})\qquad\mbox{and} \qquad \ket W =\frac{1}{\sqrt 3}(\ket{100}+\ket{010}+\ket{001}).
\eeq                        
\\
In this section we propose a QKD protocol where an entangled state of GHZ-type is prepared and processed in order to implement an improved version of E91 protocol which is capable of a \emph{superdense key distribution}, i.e. the length of the sifted key is double at the same cost of standard E91 protocol in terms of the transmitted qubits number.  
\\
Consider the Hilbert space $\sH\otimes\sH$ of a qubit pair, an orthonormal basis can be defined by means of the Bell states:    
\beq
\ket{\Phi^+}=\frac{1}{\sqrt 2}\left (\ket {00}+\ket{11}\right)\,\, ,\quad \ket{\Phi^-}=\frac{1}{\sqrt 2}\left (\ket {00}-\ket{11}\right),
\eeq
$$\ket{\Psi^+}=\frac{1}{\sqrt 2}\left (\ket {01}+\ket{10}\right)\,\,,\quad \ket{\Psi^-}=\frac{1}{\sqrt 2}\left (\ket {01}-\ket{10}\right),\quad\,\,\,$$
\\
representing the maximally entangled states of two qubits.
\\
Let $\Sigma:\sH\otimes\sH\rightarrow\sH\otimes\sH$ be a quantum gate defined in the computational basis as follows:
\beq
\Sigma\ket{00}:=\ket{\Phi^+}\quad,\quad\Sigma\ket{11}:=\ket{\Phi^-}\quad ,\quad \Sigma\ket{10}:=\ket{\Psi^+}\quad,\quad \Sigma\ket{01}:=\ket{\Psi^-}.
\eeq
The corresponding graphical represenation is:
\beq\label{sigma}
\Qcircuit @C=1.8em @R=2.em {
& \qswap & \qw&\ctrl{1}&\gate{H}&\ctrl{1}&\qw&\\
& \qswap\qwx & \qw & \targ&\qw&\targ&\qw&
} 
\eeq

\vspace{0.5cm}
\noindent
The simplest quantum gate to construct Bell states is
\beq\label{bellgate}
\Qcircuit @C=1.8em @R=2.em {
&\gate{H}&\ctrl{1}&\qw&\\
&\qw&\targ&\qw&
} 
\eeq

\vspace{0.3cm}
\noindent
whose action on the computational basis is $\ket{00}\mapsto\ket{\Phi^+}$, $\ket{11}\mapsto\ket{\Psi^-}$, $\ket{10}\mapsto\ket{\Phi^-}$ $\ket{01}\mapsto\ket{\Psi^+}$.
However $\Sigma$ admits an elementary property that we are going to apply.
\begin{proposition}
Let $\Omega:\sH^{\otimes 3}\rightarrow\sH^{\otimes 3}$ be the quantum gate defined as $\Omega:=(\bI\otimes\Sigma)(\Sigma\otimes\bI)$. Therefore $\Omega \left(\ket\psi \otimes \ket{\Phi^+}\right)=\,\,\,\ket{\Phi^+}\otimes \ket\psi$
for any $\ket\psi\in\sH$.
\end{proposition}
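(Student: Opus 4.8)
The plan is to use the linearity of $\Omega$ to reduce the claim to a finite check on the computational basis of the first tensor factor. Since $\Omega$ is a fixed linear operator and both sides of the asserted identity depend linearly on $\ket\psi$, it suffices to verify $\Omega(\ket0\otimes\ket{\Phi^+})=\ket{\Phi^+}\otimes\ket0$ and $\Omega(\ket1\otimes\ket{\Phi^+})=\ket{\Phi^+}\otimes\ket1$; the general case then follows by writing $\ket\psi=\alpha\ket0+\beta\ket1$ and taking linear combinations.

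For each of the two cases I would first expand $\ket{\Phi^+}=\frac{1}{\sqrt2}(\ket{00}+\ket{11})$, so that, for instance, $\ket0\otimes\ket{\Phi^+}=\frac{1}{\sqrt2}(\ket{000}+\ket{011})$, and then apply the two factors of $\Omega$ in the correct order. First $\Sigma\otimes\bI$ acts on the first two qubits, sending each computational pair to the corresponding Bell state via the defining rule $\Sigma\ket{00}=\ket{\Phi^+}$, $\Sigma\ket{11}=\ket{\Phi^-}$, $\Sigma\ket{10}=\ket{\Psi^+}$, $\Sigma\ket{01}=\ket{\Psi^-}$; this already produces a genuine three-qubit (GHZ-type) superposition. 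Then $\bI\otimes\Sigma$ is applied to the last two qubits, again through the same rule, after which I would collect the resulting computational amplitudes.

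The decisive point is that in the final superposition the unwanted basis kets cancel pairwise because of the relative minus signs carried by $\ket{\Phi^-}$ and $\ket{\Psi^-}$, leaving only the two surviving terms that reassemble into $\ket{\Phi^+}\otimes\ket\psi$. Concretely, for $\ket\psi=\ket0$ one finds the intermediate state $\frac12(\ket{000}+\ket{110}+\ket{011}-\ket{101})$ after the first gate, and after the second gate the amplitudes of $\ket{011}$ and $\ket{101}$ annihilate while those of $\ket{000}$ and $\ket{110}$ add, giving $\frac{1}{\sqrt2}(\ket{000}+\ket{110})=\ket{\Phi^+}\otimes\ket0$; the case $\ket\psi=\ket1$ is entirely analogous.

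The only real obstacle is bookkeeping: one must track the signs of the antisymmetric Bell components through two successive applications of $\Sigma$, since a single misplaced sign would spoil the cancellation. There is no conceptual difficulty---the statement is essentially the circuit-level expression of the fact that $\Omega$ teleports (swaps) the one-qubit state $\ket\psi$ from the first slot into the third while recreating the Bell pair $\ket{\Phi^+}$ in the first two slots---so once the basis computation is arranged as a small table of amplitudes the verification is routine.
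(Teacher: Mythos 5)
Your proposal is correct and follows essentially the same route as the paper: reduce by linearity to the two basis cases $\ket0\otimes\ket{\Phi^+}$ and $\ket1\otimes\ket{\Phi^+}$, apply $\Sigma\otimes\bI$ and then $\bI\otimes\Sigma$ explicitly, and observe the sign cancellations; your intermediate state $\frac12(\ket{000}+\ket{110}+\ket{011}-\ket{101})$ matches the paper's computation exactly. No gaps.
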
 
\vspace{0.0cm}
\begin{proof}
Consider $\ket \psi\in\sH$ decomposed on the computational basis $\ket\psi=a\ket 0 +b\ket 1$, $a,b\in\bC$. One can explicitely calculate the action of $\Omega$ on $\ket 0\otimes\ket{\Phi^+}$ and $\ket 1\otimes\ket{\Phi^+}$:
\\
$$\,\Omega (\ket 0\otimes\ket{\Phi^+})=(\bI\otimes\Sigma)(\Sigma\otimes\bI)\left[\frac{1}{\sqrt 2} (\ket{000}+\ket{011})\right]=\qquad\qquad\qquad\qquad\qquad\qquad\qquad\qquad\quad$$
$$\,\,\qquad\qquad\qquad= (\bI\otimes \Sigma)\left[\frac{1}{2}(\ket{000}+\ket{110}+\ket{011}-\ket{101})\right] =\frac{1}{\sqrt 2} (\ket{000}+\ket{110})=\ket{\Phi^+}\otimes\ket 0.$$

$$\,\Omega (\ket 1\otimes\ket{\Phi^+})=(\bI\otimes\Sigma)(\Sigma\otimes\bI)\left[\frac{1}{\sqrt 2} (\ket{100}+\ket{111})\right]=\qquad\qquad\qquad\qquad\qquad\qquad\qquad\qquad\quad$$
$$\,\,\qquad\qquad\qquad= (\bI\otimes \Sigma)\left[\frac{1}{2}(\ket{010}+\ket{100}+\ket{001}-\ket{111})\right]=\frac{1}{\sqrt 2} (\ket{001}+\ket{111})=\ket{\Phi^+}\otimes\ket 1.$$
\\
Hence $\Omega \left(\ket\psi \otimes \ket{\Phi^+}\right)=\,\,\,\ket{\Phi^+}\otimes \ket\psi$ by linearity.
\end{proof}

\vspace{0.3cm}

\noindent
Considering three qubits where two are maximally entangled, the action of $\Omega$ transfers the state of the first qubit to the third one entangling the other two. In particular the action of $\Sigma\otimes\bI$ creates a state where entanglement is distributed over all qubits. So we can assume to encrypt the state $\ket\psi$ in a 3-entangled state by application of $\Sigma\otimes\bI$ then one can apply $\bI\otimes\Sigma$ to decrypt $\ket\psi$. Since two qubits remain entangled they can be used to check the presence of eavesdroppers implementing an additional ordinary E91 protocol. 
\\
Suppose a client, Bob, prepares a qubit pair in the state $\ket{\Phi^+}_{bc}\in\sH_b\otimes\sH_c$ sending the qubit $b$ to Alice and keeping the qubit $c$ in his hands. Meanwhile Alice prepares a qubit in the state $\ket x_a\in\sH_a$ giving rise to the tripartite quantum state  $\ket{\epsilon_x}_{abc}:=\ket x_a\otimes\ket {\Phi^+}_{bc}\in\sH_a\otimes\sH_b\otimes\sH_c$, where $x=0,1$.

\vspace{-0.8cm}

$$
\Qcircuit @C=0.5em @R=1.5em {
 & & & & & \mbox{}& & & & &\\
& & & & & & & &  &\mbox{Alice} &\\
 &\ket x& & &\qw& \qw &\qswap & \qw&\ctrl{1}&\gate{H}&\ctrl{1}&\qw&\meter& y & & & & & & & & & & & & & &\qquad \qquad\quad\mbox{Bob}\\
  & & & &\qw &\qw  & \qswap\qwx & \qw & \targ&\qw&\targ&\qw&\qw&\qw&\qw&\qw &\qw &\qw&\qw&\qw&\qw&\qw&\qw&\qw&\qw&\qw&\qw&\qw&\qw&\qw&\qw&\qswap&\qw&\ctrl{1}&\gate{H} &\qw&\ctrl{1} &\meter& & y'\\
  & & &\qwx &\qw &\qw &\qw&\qw &\qw    &\qw &\qw &\qw &\qw&\qw&\qw&\qw &\qw&\qw&\qw&\qw &\qw&\qw&\qw&\qw&\qw&\qw&\qw&\qw&\qw & & &\qwx & & & & & & &\\
  & & & & & & & &  & & & & & & & & & & & & & &\ket 0 & & &\gate{H} &\ctrl{1}&\qw&\qw\qwx & & &\qwx & &\qwx &  & & \qwx& & & \\
  & & & & & & & &  & & & & & & & & & & & & & &\ket 0 & & &\qw&\targ&\qw & \qw&\qw &\qw&\qswap\qwx& \qw&\targ\qwx &\qw &\qw&\targ\qwx&\meter& & x
\gategroup {3}{1}{5}{13}{2.0em}{-} \gategroup {4}{23}{7}{38}{4.0em}{-} 
}
$$
\begin{center}
\small{Figure 1. A circuit diagram schematizing the structure of the proposed protocol.}
\end{center}

\vspace{0.5cm}

\noindent
Let $A_1, A_2, A_3, B_1, B_2, B_3$ be the observables defined in (\ref{A}) and (\ref{B}), a QKD protocol can be realized as follows:

\vspace{0.3cm}

\noindent
\textbf{Step 1}
Alice and Bob share a quantum state $\ket{\epsilon_x}_{abc}=\ket x_a\otimes\ket {\Phi^+}_{bc}$, with $x=0,1$ (the qubit pair $ab$ in Alice's lab and qubit $c$ in Bob's lab), by means of the transmission of the qubit $b$. Then Alice performs the local operation described by $\Sigma$ generating the following tripartite entangled state:
\beq\label{GHZ}
(\Sigma\otimes\bI)\ket{\epsilon_x}_{abc}=\left\{
\begin{array}{ccc}
\frac{1}{2}[\ket{000}_{abc}+\ket{110}_{abc}+\ket{011}_{abc}-\ket{101}_{abc}]& \mbox{if} & x=0\\
\\
\frac{1}{2}[\ket{010}_{abc}+\ket{100}_{abc}+\ket{001}_{abc}-\ket{111}_{abc}]& \mbox{if} & x=1
\end{array}
\right.
\eeq  

\vspace{0.1cm}

\noindent
\textbf{Step 2} Alice sends the qubit $b$ to Bob. He processes the pair $bc$ with gate $\Sigma$.
\\
\textbf{Step 3} Alice measures the observable $A_j$, with $j\in\{1,2,3\}$, on the qubit $a$. Bob measures the observable $B_j$ with $j\in\{1,2,3\}$ on the qubit $b$ and measures the observable $B_2$ on the qubit $c$ obtaining the outcome $x$. The value $x$ is stored as a bit of the secret key. The procedure is repeated $N$ times.
\\
\textbf{Step 4} Alice and Bob declare over a classical public channel the measured observables recording the outcomes produced by corresponding measurements. These bits give an additional contribution to the secret key.  
\\
\textbf{Step 5} Alice and Bob declare over a classical public channel the outcomes produced by different measurements in order to compute the test statistics (\ref{S}) detecting eventual eavesdropping attacks.
\\


\vspace{0.0cm}

\noindent
In Appendix we will remark that the state (\ref{GHZ}) belongs to the GHZ class by the computation of the Cayley's hyperdeterminant.\\ 
Assuming Steps 1-3 are repeated $N$ times ($2N$ transmitted qubits in total), Alice and Bob share a bit string of length $n\simeq N+\frac{N}{3}=\frac{4N}{3}$. Otherwise adopting the standard E91 protocol described in the previous section Alice and Bob share a key of length $n\simeq \frac{2N}{3}$ by means of the transmission of $2N$ qubits. \\
The cryptographic scheme can be represented by the quantum circuit diagram of Figure 1: The gate (\ref{bellgate}), initialized in $\ket{00}$, is inserted on the Bob's side to produce the maximally entangled state $\ket{\Phi^+}$ that is shared by the clients.

\section{Security against eavesdropping: Qualitative analysis}

Suppose an eavesdropper (Eve) wants to move an attack in order to gain the private key during the QKD. Let us show that she cannot gain sufficient information to get the sifted key \emph{nor} keeping hidden her attack. Suppose Eve is able to intercept the transmitted qubits over the quantum channel. Let us exclude the case of a \emph{man-in-the-middle} attack where Eve plays the part of Alice for Bob and viceversa sharing two different private keys with the clients because we are assuming the existence of a standard authentication protocol over the classical channel that is necessary for the sifting phase and the computation of the test statistic (\ref{S}).\\
 A general eavesdropping strategy is given by performing a measurement on the first transmitted qubit and processing the second one in a suitable way. If Eve intercepts the first qubit and performs the measurement $\{\ket 0\bra0, \ket 1\bra 1\}$ then she destroys the entanglement of the pair $bc$ shared by the clients. More in detail, suppose Eve measures the value $0$ then Alice receives the qubit $b$ in the pure state $\ket 0$ (so Bob's qubit $c$ collapses in $\ket 0$ as well and Eve can make a copy of this qubit), by the application of the gate $\Sigma$ on the separable state $\ket x\otimes \ket 0$ Alice obtains the state $\ket{\Phi^+}$ if $x=0$ or the state $\ket{\Psi^+}$ if $x=1$. Suppose Eve intercepts also the second quantum transmission (from Alice to Bob), since she 
 does not know the value of $x$, her knowledge of the total system is encoded in the following incoherent superposition:
\beq\label{eve}
\rho_{Eve}=\frac{1}{2}\left( \ket{\Phi^+ 0}\bra{\Phi^+ 0}+\ket{\Psi^+ 0}\bra{\Psi^+ 0}\right)_{}.
\eeq            
If the state of the three qubits is (\ref{eve}) then Eve cannot gain information about the value of $x$ (that is a bit of the key) by means of a single measurement process or any local operation on the qubit pair $bc'$, where $c'$ is her copy of the qubit $c$. However she makes datum $x$ inacessible to Bob as a minor result, at this stage Eve's goal is keeping hidden her interference. Therefore suppose Eve does nothing in order to avoid  any perturbation of the quantum transmission from Alice to Bob. Assuming $x=0$ (the case $x=1$ is analogue), when Bob performs $\Sigma$ on the qubit pair $bc$ then the state of the total system becomes:
\beq
(\bI\otimes\Sigma)\ket{\Phi^+ 0}=\frac{1}{\sqrt 2}\left( \ket{0\Phi^+}_{}+\ket{1\Psi^+}_{}\right),
\eeq
so when Bob measures the qubit $c$ obtains a randomized value instead of $x=0$, however the reduced state of the qubit pair $ab$ turns out to be:
\beq\label{bob}
\rho_{ab}=\frac{1}{2}\left(\ket{\Phi^+}\bra{\Phi^+}+\ket{\Psi^+}\bra{\Psi^+}\right),
\eeq
that is a separable state,  in fact it can be written as a convex combination of separable pure states: $\rho_{ab}=\frac{1}{2}(\ket{++}\bra{++}+\ket{--}\bra{--})$. As a consequence Eve's measurements on the first transmitted qubit (from Bob to Alice) can be detected by the computation of test statistic even if no operation is performed on the second transmitted qubit (from Alice to Bob). On the other hand Eve can intercept qubit $b$ once again (after Alice's processing) and perform some operation to hyde her attack during the previous quantum transmission. However any local operation performed by Eve cannot entangle the qubit $b$ (to be re-transmitted to Bob) with the qubit $c$ of Bob so the computation of test statistic reveals an interception over the quantum channel.                   
\\
Another general eavesdropping strategy is entangling the intercepted qubits with a \emph{probe quantum system} in Eve's hands. Assume Eve intercepts the first transmitted qubit and acts with an \emph{entangler}, i.e. a quantum operation on the bipartite system made by the intercepted qubit and an ancillary system (the probe) producing an entangled state. 
 So the goal of Eve is entangling a probe with Alice and Bob's systems in order to gain the values of datum $x$ (i.e. 3/4 of the secret key at the end of the round) performing measurements only on her system leaving correctly correlated Alice and Bob's qubits. This goal is unattainable because of \emph{entanglement monogamy} that can be expressed by CKW inequality \cite{ckw}: If two quantum systems $A$ and $B$ are maximally entangled then $A$ or $B$ cannot be entangled with a third system $C$. So when Alice receives the first qubit after the action of Eve's entangler, it cannot be maximally entangled with Bob's qubit as provided by the protocol then CHSH inequality cannot be maximally violated. On the other hand if Eve wants to gain the value of $x$ she must perfectly entangle the intercepted qubit with the probe destroying the quantum correlation with the Bob's qubit at all.

\section{Security against eavesdropping: Quantitative estimates}

After a qualitative discussion on intercept/re-send attacks let us quantify the security of the protocol within a general device-independent scenario. Hence we assume that any element of the QKD scheme is completely untrusted, i.e. Eve may have fabricated the measuring devices of Alice and Bob in addition to controlling the entanglement source. So Alice and Bob can estimate Eve's information only by qubit error rate $Q$ and violation $\mathcal S$ of CHSH inequality.
\\
The qubit error rate is given by  the weighted sum of the probabilities of mismatching outcomes:
\beq
Q=\frac{1}{8}p(a_2\not = b_1)+\frac{1}{8}p(a_3\not =b_2)+\frac{3}{4}p(x\not =x_b),
\eeq 
where $a_i$ are the measurement outcomes of $A_i$ obtained by Alice, $b_j$ are the outcomes of $B_j$ measured on qubit $b$  and $x_b$ is the outcome of $B_2$ measured on qubit $c$ by Bob. Then normalized mutual information between Alice and Bob is:
\beq\label{info}
I_{AB}=1-H(Q),
\eeq
where $H$ is the binary entropy function.\\
In order to eavaluate the secret key rate of the protocol, we can consider the purified version of the protocol applying lemma 1 in \cite{beaudry}. When Alice receives the qubit from Bob, she acts with an encoding (given by the processing with the gate $\Sigma$ for a fixed value of the parameter $x$) before re-sending the qubit. In the purified setting, the encoding operation (described by the action a completely positive trace-preserving map) is equivalent as a POVM-measurement (with binary outcome) performed on the received state with the half of an entangled state so that, after the measurement, the other half corresponds to the encoded state to be sent to Bob. Hence, within the purified version of the protocol, we can assume that a pure state $\ket{\Psi}_{ABE}\in\sH_A\otimes\sH_B\otimes\sH_E$ is shared among Alice, Bob and Eve and dimensions of $\sH_A$ and $\sH_B$ are unknown to Alice and Bob but fixed by Eve according to the device-independent assumption. Alice performs a measurement process $M_A$ (representing the encoding) and another one $M(A_j)$ (corresponding to the measurement of $A_j$) on two subspaces of $\sH_A$, on the other side Bob performs his measurement $M(B_j)\otimes M(B_2)$. All the measurement processes are defined by Eve who controls the devices. The measurement $M_A$ produces two possible outcomes that are assumed to be correlated to the outcomes of $M(B_2)$ then the purified protocol is equivalent to E91 protocol where the involved measurement processes are $E^A_j:=M_A\otimes M(A_j)$ and $E^B_j:=M(B_j)\otimes M(B_2)$, for $j=1,2,3$.
\\
 In order to estimate the accessible information to Eve we consider the Holevo quantity $\chi_{EB}$ between Eve and Bob. Since our purified protocol turns out to be a modification of E91 protocol within the device-independent scenario, then $\chi_{EB}$ is completely characterized by the violation $\mathcal S$  of CHSH inequality and applying the main result of \cite{acin} we have: 
\beq\label{holevo}
\chi_{EB}\leq H\left(\frac{1+\sqrt{(\mathcal S^/2)^2-1}}{2}\right).
\eeq
The secret key rate $r$ is lower-bounded by Devetak-Winter rate $r_{DW}$ \cite{dev}: 
\beq
r\geq r_{DW}=I_{AB}-\chi_{EB},
\eeq 
where $I_{AB}$ is given by (\ref{info}) and $\chi_{BE}$ is constrained by (\ref{holevo}). Under assumption of device-independence, we can provide the following secret key rate for the presented protocol:
\beq
r\geq 1-H(Q)- H\left(\frac{1+\sqrt{(\mathcal S^/2)^2-1}}{2}\right).
\eeq
\\
According to the first analysis of the protocol in section 4, the ratio between the length of the secret key and the total number of the transmitted qubit is:
\beq
R=\frac{2}{3}\,r,
\eeq   
which attains the maximum for $Q=0$ and $\mathcal S=2\sqrt 2$.\\
Summarizing: The two-way protocol can be purified applying  lemma 1 in \cite{beaudry}, so that the scheme can be described within a device-independent scenario where Alice and Bob share a pure state with Eve. They perform local measurements on their states like in E91 protocol, under collective attacks (i.e. Eve moves the same attack to each system of Alice and Bob) the Holevo quantity between Bob and Eve can be evaluated in terms of violation of CHSH inequality.

\vspace{0.0cm}

\section{Conclusions and open issues}
In this paper we have described a two-way QKD protocol based on the preparation and processing of tripartite GHZ-type entangled states. A quantum circuit has been defined to entangle three qubits in order to implement a quantum transmission where the receiver measures two distinct outcomes: The value of a fixed bit and another value to compute the test statistic to quantifying violation of CHSH inequality like within the ordinary E91 protocol. The maximum violation shows that no eavesdropping attack occurred like in the standard picture. In the ideal case of no eavesdropping  and no noise at all the length of the sifted key is $n\simeq\frac{2}{3}N$, where $N$ is the total number of transmitted qubits, instead of $n\simeq\frac{N}{3}$. 
Moreover we have focused on the security of the QKD scheme with a qualitative and a semi-quantitative analysis. Eve cannot perform any operation to gain the values of the datum $x$ or disturb the key distribution without introducing local realism in the system. The monogamy of entanglement prevents any perfectly hidden eavesdropping attack which is moved entangling intercepted qubits with an auxiliary system. In order to evaluate the trade-off between secret key rate and noise we have considered the purification of the protocol which allow us to analyze the security within a device-independent scenario where Alice, Bob and Eve share a pure state and the only figures of merit to estimate the secret key rate are the qubit error rate and the violation of CHSH inequality. Altough we presented a lower bound for the secret key rate in a device-independent scenario, further work is needed to reach a complete and satisfactory security proof of the protocol. In this regard an interesting direction of investigation could be the estimation of the effects of noise, modeled by depolarizing channels and decoherence channels, on the security.

\begin{appendix}

\noindent


\end{appendix}

\vspace{0.0cm}

\vspace{0.0cm}

\section*{Appendix}

As a technical observation let us remark that the entangled states $(\Sigma\otimes \bI_2)\ket{\epsilon_0}$ and $(\Sigma\otimes \bI_2)\ket{\epsilon_1}$, defined in (\ref{GHZ}), belong to the same entanglement class, in fact they are related by means of the invertible local operation $\bI_2\otimes\sigma_x\otimes\bI_2$. 

\begin{proposition}
The pure state (\ref{GHZ}) belongs to the entanglement class of the state GHZ.
\end{proposition}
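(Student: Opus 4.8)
The plan is to distinguish the two SLOCC classes of genuinely tripartite entangled three-qubit pure states—the GHZ class and the W class—by the unique polynomial invariant that separates them, namely the Cayley hyperdeterminant of the $2\times 2\times 2$ coefficient array (equivalently, the three-tangle $\tau$) \cite{hyperdet, ckw}. The key fact is that a genuinely entangled three-qubit pure state is SLOCC-equivalent to $\ket{GHZ}$ if and only if its hyperdeterminant is nonzero, whereas every state in the W class—and every biseparable or fully separable state—has vanishing hyperdeterminant. Since this invariant is unchanged (up to a nonzero scalar) under invertible local operations, and since the preceding observation already relates $(\Sigma\otimes\bI)\ket{\epsilon_0}$ and $(\Sigma\otimes\bI)\ket{\epsilon_1}$ through the local map $\bI\otimes\sigma_x\otimes\bI$, it suffices to carry out the computation for the single representative with $x=0$.

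First I would write the $x=0$ state in (\ref{GHZ}) as $\sum_{i,j,k\in\{0,1\}} a_{ijk}\ket{ijk}$ and read off the nonvanishing coefficients
\beq
a_{000}=a_{110}=a_{011}=\tfrac{1}{2},\qquad a_{101}=-\tfrac{1}{2},
\eeq
all remaining $a_{ijk}$ being zero. Next I would substitute these into the standard expression for the hyperdeterminant, $\mathrm{Det}(a)=d_1-2d_2+4d_3$, where $d_1$ collects the four squared pairings $a_{000}^2a_{111}^2+\cdots$, $d_2$ the six mixed quartic products, and $d_3=a_{000}a_{011}a_{101}a_{110}+a_{001}a_{010}a_{100}a_{111}$. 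Because in our array every monomial appearing in $d_1$ and in $d_2$ contains at least one of the vanishing coefficients $a_{001},a_{010},a_{100},a_{111}$, both $d_1$ and $d_2$ drop out, and only the first term of $d_3$ survives.

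The computation then reduces to the single product $a_{000}a_{011}a_{101}a_{110}=-\tfrac{1}{16}$, giving
\beq
\mathrm{Det}(a)=4d_3=-\tfrac{1}{4}\neq 0,
\eeq
equivalently $\tau=4\,|\mathrm{Det}(a)|=1$, the maximal value attained by $\ket{GHZ}$ itself. A nonzero hyperdeterminant rules out the W class and all non-genuinely-entangled classes, so by the classification of \cite{cirac} the state lies in the GHZ class; combined with the equivalence relating the $x=0$ and $x=1$ representatives, this proves the proposition. I do not expect a genuine obstacle here: the only points requiring care are fixing the correct sign conventions in the hyperdeterminant formula and verifying that the sparse structure of the array indeed annihilates $d_1$ and $d_2$, so that the nonvanishing collapses to one easily evaluated monomial.
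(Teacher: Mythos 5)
Your proposal is correct and follows essentially the same route as the paper: compute the Cayley hyperdeterminant of the coefficient hypermatrix, observe it is nonzero, and invoke its (S)LOCC invariance together with $Det(\ket W)=0$, $Det(\ket{GHZ})=1/4$ to place the state in the GHZ class. The only discrepancy is numerical --- you obtain $Det=-\tfrac{1}{4}$ while the paper states $Det=1$; under the standard convention your value is the correct one (and consistent with $\tau=4|Det|=1$), but since both are nonzero the conclusion is unaffected.
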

\begin{proof}
If we consider a pure state $\ket\Psi =\sum_{i=0}^1 a_{ijk}\ket{ijk}$ of three qubits, the Cayley's hyperdeterminant of its coefficients hypermatrix is defined as:
$$
Det(\ket\Psi):=a_{000}^2a_{111}^2+a_{001}^2a_{110}^2+a_{100}^2 a_{011}^2
-2[a_{000}a_{001}a_{110}a_{111}+a_{010}a_{101}a_{101}a_{111}+\qquad\qquad\quad$$
$$\qquad+a_{000}a_{011}a_{100}a_{111}+a_{001}a_{010}a_{101}a_{110}+
a_{001}a_{011}a_{101}a_{100}+a_{010}a_{011}a_{101}a_{100}]+$$
\vspace{-0.2cm}
$$+4[a_{000}a_{011}a_{101}a_{100}+a_{001}a_{010}a_{100}a_{111}].\qquad\qquad\qquad\qquad\qquad\qquad\qquad$$
\\
\noindent
We have $Det\left[(\Sigma\otimes\bI_2)\ket{\epsilon_x}_{abc}\right]=1$ for $x=0,1$. The Cayley's hyperdeterminat is invariant under the action of $SL(2,\bC)^{\otimes 3}$ \cite{hyperdet}, then the property $Det\not=0$ is invariant under the action of $GL(2,\bC)^{\otimes 3}$. Since $Det(\ket W)=0$ and $Det(\ket {GHZ})=1/4$ we can conclude that (\ref{GHZ}) is a GHZ-type state.
\end{proof}


\begin{thebibliography}{        }

\bibitem{acin} A. Acin, N. Brunner, N. Gisin, S. Massar, S. Pironio, V. Scarani \emph{Device-independent security of quantum cryptography against collective attacks} Phys. Rev. Lett. 98, 230501 (2007)  

\bibitem{aspect} A. Aspect, J. Dalibard, G. Roger \emph{Experimental test of Bell's inequlities using time-varying analyzers}
Phys. Rev. Lett. 49, 1804-1807 (1982)


\bibitem{beaudry} N. J. Beaudry, M. Lucamarini, S. Mancini, R. Renner \emph{Security of two-way quantum key distribution} Phys. Rev. A 88, 062302 (2013)


\bibitem{bell} J. S. Bell \emph{On Einstein Podolski Rosen paradox} Physics 1, 195-200 (1964)

\bibitem{bb} C. H. Bennett and G. Brassard in Proceedings of IEEE International Conference on Computers, Systems and Signal Processing (IEEE, New York, 1984) pp. 175-179


\bibitem{n} I. Chuang, M. Nielsen, \emph{Quantum Computation and Quantum Information}  Cambridge University Press (2000)

\bibitem{ckw} V. Coffman, J. Kundu, W. K. Wootters \emph{Distributed entanglement} Phys. Rev. A 61 052306 (2000)

\bibitem{dev} I. Devetak, A. Winter \emph{Distillation of secret key and entanglement from quantum states} Proc. R. Soc. Lond. A 461, 207
(2005)


\bibitem{cirac} W. D\"ur, G. Vidal, J. I. Cirac \emph{Three qubits can be entangled in two inequivalent ways} Phys.Rev. A62 (2000) 062314 

\bibitem{hyperdet} I. M. Gelfand, N. M. Kapranov, A.V. Zelevinski \emph{Hyperdeterminants} Adv. in Mathematics 96, 226-263 (1992)

\bibitem{m} C. Kollmitzer, M. Pivk \emph{Applied Quantum Cryptography} Lecture Notes in Physics 797 (2010)

\bibitem{epr} A. Einstein, B. Podolski, N. Rosen \emph{Can Quantum-Mechanical descritpion of physical reality be considered complete?} Phys. Rev. 47, 777 (1935)

\bibitem{e} A.K. Ekert, \emph{Quantum cryptography based on Bell's theorem} Phys. Rev. Lett. 67, 661 (1991)

\bibitem{OZ} H. Ollivier, W. H. Zurek \emph{Quantum Discord: A measure of quantumness of correlations}. Phys. Rev. Lett. 88, 017901 (2001)

\bibitem{peres} A. Peres, D.R. Terno \emph{Quantum information and relativity theory} Rev. Modern Physics 76.1:93 (2004)


\bibitem{werner} R.F. Werner \emph{Quantum states with Einstein-Podolski-Rosen correlations admitting a hidden-variable model}. Phys. Rev. A 40, 4277-4281 (1989)

\bibitem{CHSH} J.F. Clauser, M.A. Horne, A. Shimony, R.A. Holt \emph{Proposed experiment to test local hidden-variable theories} Phys. Rev. Lett. 23, no. 15, 880-884 (1969)

\bibitem{no-cloning} W.K. Wootters, W.H. Zurek \emph{A Single Quantum Cannot be Cloned}. Nature 299, 802-803 (1982) 







\end{thebibliography}
\end{document}